\documentclass[a4paper,UKenglish,cleveref,numberwithinsect, autoref, thm-restate]{lipics-v2021}

\usepackage{mathtools, mathpartir}
\usepackage{tikz}
\usepackage{tikz-cd}
\usepackage{scalerel}
\usepackage{dsfont}

\hypersetup{
  colorlinks=true,
  linkcolor=black!60!blue,
  citecolor=purple!60!black,
  urlcolor=cyan!60!blue
}

\newcommand{\B}{\mathsf{B}}
\newcommand{\I}{\mathsf{I}}
\newcommand{\C}{\mathsf{C}}
\newcommand{\W}{\mathsf{W}}
\newcommand{\K}{\mathsf{K}}
\newcommand{\D}{\mathsf{D}}
\newcommand{\F}{\mathsf{F}}
\newcommand{\Scom}{\mathsf{S}}

\newcommand{\Base}{\mathcal{B}}
\newcommand{\Total}{\mathcal{E}}
\newcommand{\fib}{\mathsf{p}}
\newcommand{\comp}{\chi}
\newcommand{\Arrow}[1]{#1^{\rightarrow}}
\newcommand{\cod}{\mathsf{cod}}
\newcommand{\proj}[1]{\pi_{#1}}
\newcommand{\CtxExt}[2]{#1 . #2}
\newcommand{\subst}[1]{#1^*}
\newcommand{\fiber}[2]{#1[#2]}

\newcommand{\disp}[1]{\overline{#1}}
\newcommand{\xx}{\disp{x}}
\newcommand{\yy}{\disp{y}}
\newcommand{\ff}{\disp{f}}

\newcommand{\TotalL}{\mathcal{L}}
\newcommand{\fibM}{\mathsf{q}}
\newcommand{\linL}{\mathsf{L}}
\newcommand{\linM}{\mathsf{M}}

\newcommand{\LCA}{A}
\newcommand{\LCAtoPCA}[1]{#1_!}

\newcommand{\Asm}[1]{\mathsf{Asm}(#1)}
\newcommand{\DAsm}[1]{\mathsf{DAsm}_{\mathsf{C}}(#1)}
\newcommand{\DAsmPr}{p}
\newcommand{\LAsm}[1]{\mathsf{DAsm}_{\mathsf{L}}(#1)}
\newcommand{\LAsmPr}{q}
\newcommand{\total}{\int}
\newcommand{\AsmL}{\mathsf{L}}
\newcommand{\AsmM}{\mathsf{M}}
\newcommand{\realize}{\Vdash}

\newcommand{\lin}{\multimap}
\newcommand{\tensor}{\otimes}

\newcommand{\Sqcap}{%
  \mathop{%
    \mathchoice
      {\scalebox{2.0}{$\sqcap$}}
      {\scalebox{1.6}{$\sqcap$}}
      {\scalebox{1.2}{$\sqcap$}}
      {\scalebox{1.0}{$\sqcap$}}
  }\displaylimits
}

\newcommand{\Sqsubset}{%
  \mathop{%
    \mathchoice
      {\scalebox{2.0}{$\sqsubset$}}
      {\scalebox{1.6}{$\sqsubset$}}
      {\scalebox{1.2}{$\sqsubset$}}
      {\scalebox{1.0}{$\sqsubset$}}
  }\displaylimits
}

\newcommand{\LSigSym}{\mathop{\Sqsubset}}
\newcommand{\LSig}[2]{\LSigSym_{#1} #2}
\newcommand{\LPiSym}{\mathop{\Sqcap}}
\newcommand{\LPi}[2]{\LPiSym_{#1} #2}
\newcommand{\LEq}[2]{\mathsf{Eq}(#1, #2)}
\newcommand{\Univ}{\mathcal{U}}
\newcommand{\El}{\mathsf{El}}
\newcommand{\ElC}[1]{\El_{\mathsf{C}}\left(#1\right)}
\newcommand{\ElL}[1]{\El_{\mathsf{L}}\left(#1\right)}
\newcommand{\expL}[1]{\mathbin{!} #1}
\newcommand{\FunExt}[1]{\mathsf{FunExt}(#1)}

\newcommand{\cartyp}[1]{{#1}\,\mathsf{type}}
\newcommand{\lintyp}[1]{{#1}\,\mathsf{lin}}

\newcommand{\projl}[1]{\pi_1 \> #1}

\newcommand{\ListI}[1]{\mathsf{List}^*(#1)}
\newcommand{\List}[1]{\mathsf{List}(#1)}
\newcommand{\NilI}{\mathsf{nil}^*}
\newcommand{\Nil}{\mathsf{nil}}
\newcommand{\ConsSym}{::}
\newcommand{\ConsI}[2]{#1 \ConsSym^* #2}
\newcommand{\Cons}[2]{#1 \ConsSym #2}
\newcommand{\Rec}[4]{\mathsf{rec}_{#1}(#2, #3, #4)}
\newcommand{\Alg}{\mathsf{Alg}}
\newcommand{\IsMor}[3]{\mathsf{IsMor}(#1, #2, #3)}
\newcommand{\Mor}[2]{\mathsf{Mor}(#1, #2)}

\newcommand{\defeq}{\mathrel{:\equiv}}
\newcommand{\judeq}{\mathrel{\equiv}}

\newcommand{\lam}[2]{\lambda{#1}.\,{#2}}

\newcommand{\lineq}[2]{#1 =_\mathsf{L} #2}

\newcommand{\m}{\mathsf{M}}
\newcommand{\lmod}{\mathsf{L}}

\pdfoutput=1 
\hideLIPIcs  

\title{Impredicativity in Linear Dependent Type Theory} 

\author{Sam L. Speight}{School of Computer Science, University of Birmingham, UK \and \url{https://slspeight.github.io/} }{s.l.speight@bham.ac.uk}{https://orcid.org/0009-0009-3706-9427}
{
This research was supported by the Engineering
and Physical Sciences Research Council grant EP/W034514/1
}

\author{Niels van der Weide}
{iCIS, Radboud University Nijmegen, The Netherlands \and \url{https://nmvdw.github.io}}
{nweide@cs.ru.nl}
{https://orcid.org/0000-0003-1146-4161}
{
This research was supported by the NWO project
``The Power of Equality'' OCENW.M20.380,
which is financed by the Dutch Research Council (NWO).
}

\authorrunning{S. Speight and N. van der Weide} 

\Copyright{S. Speight and N. van der Weide} 

\ccsdesc[500]{Theory of computation~Categorical semantics}
\ccsdesc[500]{Theory of computation~Type theory}
\ccsdesc[500]{Theory of computation~Linear logic}

\keywords{
Realizability,
Categorical semantics,
Linear type theory,
Impredicativity}

\category{} 

\relatedversion{} 

\supplement{}
\supplementdetails[cite={linear-realizability-code}]{Software}{https://doi.org/10.5281/zenodo.18431109}

\acknowledgements{
The first author thanks Samson Abramsky for first introducing them to LCAs.
The second author thanks Dan Frumin for helpful discussions.
}

\nolinenumbers 

\EventEditors{John Q. Open and Joan R. Access}
\EventNoEds{2}
\EventLongTitle{42nd Conference on Very Important Topics (CVIT 2016)}
\EventShortTitle{CVIT 2016}
\EventAcronym{CVIT}
\EventYear{2016}
\EventDate{December 24--27, 2016}
\EventLocation{Little Whinging, United Kingdom}
\EventLogo{}
\SeriesVolume{42}
\ArticleNo{23}

\begin{document}

\maketitle

\begin{abstract}
    We construct a realizability model of linear dependent type theory from a linear combinatory algebra. Our model motivates a number of additions to the type theory. In particular, we add a universe with two decoding operations: one takes codes to cartesian types and the other takes codes to linear types. The universe is impredicative in the sense that it is closed under both large cartesian dependent products and large linear dependent products. We also add a rule for injectivity of the modality turning linear terms into cartesian terms. With all of the additions, we are able to encode (linear) inductive types. As a case study, we consider the type of lists over a linear type, and demonstrate that our encoding has the relevant uniqueness principle. The construction of the realizability model is fully formalized in the proof assistant Rocq.
\end{abstract}

\section{Introduction}\label{sec:intro}
Various proof assistants,
such as Rocq~\cite{therocqdevelopmentteam:2025}
and Lean~\cite{demoura:2015}, are based on \textbf{higher-order type theory}.
Higher-order type theories, such as the calculus of constructions~\cite{coquand:1988a},
are powerful systems that provide a suitable foundation for both mathematics and computer science, that is, for both proving and programming.
One of their key features is impredicativity.

Impredicativity comes in many forms,
and our focus on an \textbf{impredicative universe of types} in dependent type theory.
This means that the universe is closed under functions from any type to a small one.
Concisely, small types are closed under ``large'' products.


Impredicativity adds a lot strength to a logical system.
For instance,
such a system admits a translation from System F~\cite{Girard1972,Reynolds1974},
which allows one to use \textbf{impredicative encodings} (or polymorphic encodings) to reduce inductive types to (dependent) function types.
Such encodings guarantee that the resulting types satisfy the necessary recursion and $\beta$-principles.
However, they do not satisfy the relevant induction principle in general~\cite{geuvers:2001,rummelhoff:2004}.
To fix these defects,
various methods have been developed~\cite{altenkirch:2024,awodey:2018,marmaduke:2023,wadler:2003,wadler:2007}.

One challenge with higher-order type theories lies in finding semantics for them.
Since there is no (classical) set-theoretic model of higher-order type theories~\cite{reynolds:1984},
one needs alternative techniques to construct models for them.
\textbf{Realizability},
which dates back to Kleene~\cite{kleene:1945},
has established itself as one of the main methods to  do so~\cite{streicher:1991}.
Its applications range from program extraction~\cite{paulin-mohring:1989} to the construction of models that demonstrate the consistency of axioms
such as the formal Church's thesis~\cite{hyland:1982,oosten:2008}
and parametricity~\cite{krishnaswami:2013}.

Our goal in this paper is to extend the methods of realizability to higher-order type theories that incorporate \textbf{linearity} in their typing system.
Intuitively,
linearity says that functions are not allowed to delete or duplicate their arguments: 
they must use each of their arguments exactly once.
This discipline lends itself to various application domains,
for instance,
quantum programming languages~\cite{fu:2022}
and session types~\cite{DBLP:journals/pacmpl/ThiemannV20}.

Specifically,
we set out to to construct a realizability model of linear dependent type theory (LDTT)~\cite{cervesato1996linear,MR1946180,DBLP:conf/popl/KrishnaswamiPB15,vakar:2015,Vakar2017,lundfall:2017,Lundfall2018DiagramModel} that features impredicativity.
The model demostrates consistency of impredicativity with LDTT.
As an application of impredicativity in LDTT,
we use impredicative encodings
to construct an initial algebra of lists with elements taken from a linear type.

\subparagraph*{Formalization}
The results in \Cref{sec:linearrealizability,sec:linearrealizabilitymodel}
are formalized in the proof assistant Rocq~\cite{therocqdevelopmentteam:2025} using the UniMath library~\cite{unimath}.
We make heavy use of numerous results on category theory in UniMath~\cite{ahrens:2015,ahrens:2024a,wullaert:2022}.
Our formalization is publicly available online~\cite{linear-realizability-code},
and the file \texttt{Paper.v} indicates the relation between the paper and the formalized results.

Throughout this paper,
we use set-theoretic notation,
whereas our formalization is based on type theory.
While the largest part of the development remains unaffected by this difference in foundations,
there is one aspect worth mentioning:
in set theory, one defines Grothendieck fibration using functors;
whereas one uses displayed categories in type theory~\cite{ahrens:2019a}.

\subparagraph*{Contributions and outline}
The main contributions of this paper are as follows.
\begin{itemize}
    \item We construct a realizability model of linear dependent type theory with an impredicative universe of types (\Cref{sec:linearrealizabilitymodel}).
    \item We formalize this model in the proof assistant Rocq.
    \item We show how to use impredicativity in linear dependent type theory to construct the initial list algebra (\Cref{sec:encodings}).
\end{itemize}
In \Cref{sec:dltt} we discuss linear dependent type theory. The basic theory of (linear) combinatory algebras is recalled in \Cref{sec:linearrealizability}.
Related work is discussed in \Cref{sec:related}.

\section{Linear dependent type theory}\label{sec:dltt}
In this section, we recall the syntax of \textbf{linear dependent type theory} (LDTT),
and we introduce a number of additions that we use throughout this paper.
The version of LDTT that we use is closest to that of Lundfall \cite{lundfall:2017,Lundfall2018DiagramModel}, which in turn is based on the dual-context LDTT of Krishnaswami, Pradic and Benton \cite{DBLP:conf/popl/KrishnaswamiPB15}.
The different forms of judgement for linear dependent type theory are laid out in \autoref{tab:judgements}. There are also judgements for judgemental equality of contexts, types and terms, which we omit. Note that we have both cartesian contexts and mixed contexts; the latter contains a cartesian part and a linear part (though either may be empty). Types (whether cartesian or linear) are only well-formed in a cartesian context, hence cannot depend on linear variables. Cartesian terms are typed in a cartesian context; whereas linear terms are typed in a mixed context.

\begin{table}[t!]
    \centering
    \resizebox{\textwidth}{!}{%
    \begin{tabular}{||l l||} 
        \hline
        Judgement & Intended meaning \\ [0.5ex] 
        \hline\hline
        $\vdash \Gamma \, \mathsf{context}$
        &$\text{$\Gamma$ is a well-formed cartesian context}$
        \\
        $\vdash \Gamma \mid \Xi \, \mathsf{context}$
        &$\text{$\Gamma\mid\Xi$ is a well-formed mixed context, with cartesian part $\Gamma$ and linear part $\Xi$}$
        \\
        $\Gamma \vdash A \, \mathsf{type}$
        &$\text{$A$ is a well-formed cartesian type in cartesian context $\Gamma$}$
        \\
        $\Gamma \vdash \lintyp{A}$
        &$\text{$A$ is a well-formed linear type in cartesian context $\Gamma$}$
        \\
        $\Gamma \vdash a: A$
        &$\text{$a$ is a cartesian term of cartesian type $A$ in cartesian context $\Gamma$}$
        \\
        $\Gamma \mid \Xi \vdash a: A$
        &$\text{$a$ is a linear term of linear type $A$ in mixed context $\Gamma\mid\Xi$}$
        \\ [1ex] 
        \hline
    \end{tabular}
    }
    \caption{The different forms of judgement in linear dependent type theory}
    \label{tab:judgements}
\end{table}

Our type theory comes with various type formers.
On the cartesian side, we have unit-, $\prod$- and $\sum$-types, as well as identity types of cartesian types. We assert equality reflection for these identity types, collapsing propositional equality and definitional equality.
The rules for these constructors are the usual ones.

On the linear side, we have non-dependent multiplicative connectives (e.g. $\lin$, $\otimes$ and its unit), and additive connectives (e.g. $\oplus$, $\&$ and their units). These have the usual rules, familiar from linear type theory, formulated over an ambient cartesian context. Furthermore, we have linear analogues of $\prod$ and $\sum$, respectively, $\Sqcap$ and $\Sqsubset$. To give a flavour, the rules for $\Sqcap$ are provided in \autoref{fig:linpirules}. Note that the indexing type is required to be cartesian; the same is true for $\Sqsubset$.
\begin{figure*}[t!]
    \centering
    \resizebox{0.85\textwidth}{!}{%
    \makebox[\textwidth]{$
    \begin{aligned}
    \inferrule*[right=\text{$\Sqcap$}-f]
    {\Gamma \vdash \cartyp{A}
    \\
    \Gamma, x:A \vdash \lintyp{B}}
    {\Gamma \vdash \lintyp{\Sqcap_{x:A}B}}
    \qquad
    \inferrule*[right=\text{$\Sqcap$}-i]
    {\Gamma, x:A \mid \Xi \vdash b: B}
    {\Gamma \mid \Xi \vdash \lam{x}{b}:\Sqcap_{x:A}B}
    \qquad
    \inferrule*[right=\text{$\Sqcap$}-e]
    {\Gamma \mid \Xi \vdash t:\Sqcap_{x:A}B
    \\
    \Gamma \vdash a:A}
    {\Gamma \mid \Xi \vdash t \> a:B[a/x]}
    \end{aligned}
    $}
    }
    \\[0.5em]
    \resizebox{0.85\textwidth}{!}{%
    \makebox[\textwidth]{$
    \begin{aligned}
    \inferrule*[right=\text{$\Sqcap$}-\text{$\beta$}]
    {\Gamma, x:A \mid \Xi \vdash b: B
    \\
    \Gamma \vdash a: A}
    {\Gamma \mid \Xi \vdash (\lam{x}{b}) \> a \equiv b[a/x]: B[a/x]}
    \qquad
    \inferrule*[right=\text{$\Sqcap$}-\text{$\eta$}]
    {\Gamma \mid \Xi \vdash b: \Sqcap_{x:A}B}
    {\Gamma \mid \Xi \vdash \lam{x}{b \> x} \equiv b: \Sqcap_{x:A}B}
    \end{aligned}
    $}
    }
\caption{Rules for linear dependent functions}
\label{fig:linpirules}
\end{figure*}

Connecting the certesian and linear sides, we have two \textbf{modalities}. The modality $\m$ turns linear types and terms into cartesian types and terms. We can only apply $\m$ to a linear term in the empty linear context. The modality $\lmod$ turns linear types and terms into cartesian types and terms. We provide the rules for $\m$ in \autoref{fig:mrules}.
\begin{figure*}[t]
    \centering
    \resizebox{0.85\textwidth}{!}{%
    \makebox[\textwidth]{$
    \begin{aligned}
    \inferrule*[right=\text{$\m$}-f]
    {\Gamma \vdash \lintyp{A}}
    {\Gamma \vdash \cartyp{\m(A)} }
    \qquad
    \inferrule*[right=\text{$\m$}-i]
    {\Gamma \mid \cdot \vdash a: A}
    {\Gamma \vdash \m(a):\m(A)}
    \qquad
    \inferrule*[right=\text{$\m$}-e]
    {\Gamma \vdash a: \m(A)}
    {\Gamma \mid \cdot \vdash \mu(a): A}
    \end{aligned}
    $}
    }
    \\[0.5em]
    \resizebox{0.85\textwidth}{!}{%
    \makebox[\textwidth]{$
    \begin{aligned}
    \inferrule*[right=\text{$\m$}-\text{$\beta$}]
    {\Gamma \mid \cdot \vdash a:A}
    {\Gamma \mid \cdot \vdash \mu(\m(a)) \equiv a: A}
    \qquad
    \inferrule*[right=\text{$\m$}-\text{$\eta$}]
    {\Gamma \vdash a:\m(A)}
    {\Gamma \vdash \m(\mu(a)) \equiv a: \m(A)}
    \end{aligned}
    $}
    }
\caption{Rules for the modality $\m$ sending linear types and terms to cartesian types and terms}
\label{fig:mrules}
\end{figure*}


The remaining rules that we postulate are justified by our realizability model in \autoref{sec:linearrealizabilitymodel}, and, to our knowledge, have not previously been considered in the setting of linear dependent type theory. All of them play a part in the impredicative encodings of \autoref{sec:encodings}.

\subparagraph*{Equalizers}\label{sec:equalizers}
The first set of rules corresponds to having equalizers on the linear side. The formation rule is:
\[
\inferrule*[right=eq-f]
{\Gamma \mid \cdot \vdash f: A \lin B
\\
\Gamma \mid \cdot \vdash g: A \lin B
}
{\Gamma \vdash \lintyp{\LEq{f}{g}} }
\]
Next we have the introduction rule, which corresponds to the existence part of the universal property of equalizers.
\[
\inferrule*[right=eq-i]
{\Gamma \mid \Xi \vdash h: C \lin A
\\
\Gamma \mid \Xi \vdash f \circ h \equiv g \circ h: C \lin B
}
{\Gamma \mid \Xi \vdash \overline{h}: C \lin \LEq{f}{g}}
\]
The elimination rule supplies the equalizer map.
\[
\inferrule*[right=eq-e]
{ }
{\Gamma \mid \Xi \vdash \iota_{f,g}: \LEq{f}{g} \lin A}
\]
We have two $\beta$-rules.
The first $\beta$-rule expresses that the maps $f$ and $g$ are equalized by the map $\iota_{f,g}$. Note that they are equalized up to judgemental equality.
The second $\beta$-rule corresponds to the commutativity condition on the universal morphism.
\begin{align}
    &\Gamma \mid \Xi \vdash f \circ \iota_{f,g} \judeq g \circ \iota_{f,g}: \LEq{f}{g} \lin B
    \tag{\textsc{eq-\text{$\beta$\textsubscript{1}}}}
    \\
    &\Gamma \mid \Xi \vdash \iota_{f,g} \circ \overline{h} \judeq h: C \lin A
    \tag{\textsc{eq-\text{$\beta$\textsubscript{2}}}}
\end{align}
Finally, we have an $\eta$-rule that corresponds to uniqueness of the universal morphism.
\begin{align*}
    \inferrule*[right=eq-\text{$\eta$}]
    {\Gamma \mid \Xi \vdash h, h': C \lin \LEq{f}{g}
    \\
    \Gamma \mid \Xi \vdash \iota_{f,g} \circ h \judeq \iota_{f,g} \circ h' : C \lin A
    }
    {\Gamma \mid \Xi \vdash h \judeq h' : C \lin \LEq{f}{g}}
\end{align*}


\subparagraph*{Injectivity}

We add a rule for injectivity of the $\m$ modality.
\[
\inferrule*[right=\text{$\m$}-inj]
{\Gamma \mid \cdot \vdash \m(a) \judeq \m(a'): \m(A)
}
{\Gamma \mid \cdot \vdash a \judeq a': A}
\]
Semantically, this corresponds to faithfulness of the functor $\m$ (see \autoref{exa:lca-linear-comp-cat}). Note that the linear context is empty, as we are only allowed to apply $\m$ in an empty linear context.

Identity types of cartesian types together with the modality $\m$ facilitates derived identity types of \textit{linear} types (c.f. \cite[Section 6.2]{lundfall:2017}). We write $\lineq{a}{a'}$ to mean $\m(a) = \m(a')$, where we use $=$ for identity types of cartesian types. Thus we may only consider identity types of two linear terms in the empty linear context.

Combining $\m$-\textsc{inj} with equality reflection yields equality reflection for derived linear identity types: Suppose that we have a proof $\lineq{a}{a'}$. This is, by definition, a proof that $\m(a) = \m(a')$. By equality reflection we can deduce $\m(a) \equiv \m(a')$. Finally, using $\m$-\textsc{inj}, we conclude that $\m(a)\equiv \m(a')$. We will make use of linear equality reflection in \autoref{sec:encodings}.

We derive the following principle of function extensionality for $\Sqcap$. To obtain this, we first derive a judgemental version using \text{$\Sqcap$}-\text{$\eta$}, then we use equality reflection to get the propositional version:
\[
\inferrule*[right=\text{$\Sqcap$}-FunExt]
{\Gamma \mid \cdot \vdash f, g : \Sqcap_{x : A} B
\\
\Gamma, x : A \vdash p : \lineq{f \> x}{g \> x}}
{\Gamma \vdash \FunExt{p} : \lineq{f}{g} }
\]
Similarly,
we have function extensionality for non-dependent linear function types.

\subparagraph*{Impredicative universe}
Finally, we have rules are for an impredicative Tarski-style universe $\Univ$.
Note that $\Univ$ itself is a cartesian type, as we want to be able to use variables of type $\Univ$ (i.e. codes for types) in an unrestricted manner.
\[
\inferrule*[right=\text{$\Univ$}-f]
{\vdash \Gamma \, \text{context}
}
{\Gamma \vdash \cartyp{\Univ}}
\]
The universe comes equipped with two decoding operations: the first takes codes to cartesian types, and the second takes codes to linear types.
This is in contrast to having separate cartesian and linear universes.
\[
\inferrule*[right=\text{$\mathsf{El_C}$}]
{\Gamma \vdash A: \Univ
}
{\Gamma \vdash \cartyp{\ElC{A}}}
\qquad
\inferrule*[right=\text{$\mathsf{El_L}$}]
{\Gamma \vdash A: \Univ
}
{\Gamma \vdash \lintyp{\ElL{A}}}
\]

The universe is impredicative in the sense that it is closed under ``large'' cartesian dependent products \textit{and} ``large'' linear dependent products.
\[
\inferrule*[right=\text{$\widehat{\prod}$}-f]
{\Gamma \vdash \cartyp{A} \\
\Gamma, x:A \vdash B:\Univ
}
{\Gamma \vdash \mathop{\widehat{\prod}}_{x:A} B : \Univ}
\qquad
\inferrule*[right=\text{$\widehat{\Sqcap}$}-f]
{\Gamma \vdash \cartyp{A} \\
\Gamma, x:A \vdash B:\Univ
}
{\Gamma \vdash \mathop{\widehat{\Sqcap}}_{x:A} B : \Univ}
\]
To distinguish these ``$\Univ$-internal'' type formers from the ``external'' type formers, we have used the $\widehat{(-)}$ decoration. We stipulate that associated internal and external type formers are related by an isomorphism:
\begin{align}\label{eqn:el-isos}
    \ElC{\mathop{\widehat{\prod}}_{x:A} B} \cong \prod_{x:A} \ElC{B}
    \qquad
    \ElL{\mathop{\widehat{\Sqcap}}_{x:A} B} \cong \Sqcap_{x:A} \ElL{B}
\end{align}

The above isomorphisms can be used to derive abstraction and application operators. When there is no ambiguity, such as when we are working entirely within $\Univ$, we will use the usual notation (without the decoration) for these operators.
For the impredicative encodings in \autoref{sec:encodings}, we also require that the universe is closed under linear functions and equalizers.


\section{Linear realizability}\label{sec:linearrealizability}
In this section,
we recall the notion of \textbf{linear combinatory algebra}~\cite{AbramskyHaghverdiScott2002LCAs}.
Before we do so,
we first look at the notions of \textbf{applicative structure} and \textbf{combinatory algebra}~\cite{Schonfinkel1924}.

\begin{definition}
    An \textbf{applicative structure} is a set $A$ together with a binary operation $(-)\bullet(=):A\times A \rightarrow A$,
    which we call \textbf{application}.
\end{definition}
We often omit the $\bullet$, and we always associate to the left, so that, e.g. $abc = (a\bullet b)\bullet c$.

The objects defined in the following are usually called \textbf{combinatory algebras}. Here we prepend the adjective cartesian to emphasise the difference between these and \textit{linear} combinatory algebras, to be introduced shortly.
\begin{definition}[Schönfinkel \cite{Schonfinkel1924}]
    A \textbf{cartesian combinatory algebra} (CCA) is an applicative structure $(A,\bullet)$ in which there exist combinators $\Scom',\K'\in A$
    such that $\Scom' a b c = ac(bc)$ and $\K' a b = a$.
\end{definition}
We decorate combinators in CCAs with primes in order to distinguish them from combinators in \textit{linear} combinatory algebras.

The following definition of \textit{linear} combinatory algebra \cite{AbramskyHaghverdiScott2002LCAs} corresponds to a Hilbert-style axiomatisation of the $!,\lin$-fragment of linear logic.
\begin{definition}
    A \textbf{linear combinatory algebra} (LCA) is an applicative structure $(A,\bullet)$, together with a unary operation $!:A\rightarrow A$, in which there exist ``combinators''
    $\B,\I,\C,\W,\K,\D,\delta,\F\in A$
    satisfying the identities in \Cref{tab:combinators}.
\end{definition}

Examples of linear combinator algebras can be found in the literature~\cite{AbramskyHaghverdiScott2002LCAs,hoshino:2007}.
Another convention we follow is that $!$ binds more tightly than $\bullet$. Observe that certain combinators compute only when particular arguments are guarded by $!$.
\begin{table}[t!]
    \centering
    \begin{tabular}{||l l l||} 
        \hline
        Combinator identity & Principal type & Logical principle \\ [0.5ex] 
        \hline\hline
        $\B abc=a(bc)$ & $(\alpha\lin\beta)\lin(\gamma\lin\alpha)\lin\gamma\lin\beta$ & cut/composition \\ 
        $\I a = a$ & $\alpha\lin\alpha$ & identity \\
        $\C abc = acb$ & $(\alpha\lin\beta\lin\gamma)\lin\beta\lin\alpha\lin\gamma$ & exchange \\
        $\W a!b = a!b!b$ & $(\expL{\alpha}\lin \expL{\alpha}\lin\beta)\lin \expL{\alpha}\lin\beta$ & contraction \\
        $\K a!b = a$ & $\alpha\lin \expL{\beta} \lin\alpha$ & weakening \\
        $\D!a = a$ & $\expL{\alpha} \lin\alpha$ & dereliction \\
        $\delta!a = !!a$ & $\expL{\alpha} \lin \expL{!\alpha}$ & comultiplication \\
        $\F!a!b = !(ab)$ & $\expL{(\alpha\lin\beta)}\lin \expL{\alpha}\lin\expL{\beta}$ & functoriality \\ [1ex] 
        \hline
    \end{tabular}
    \caption{Combinator identities with their corresponding principal type and logical principle}
    \label{tab:combinators}
\end{table}

For an applicative structure, the existence of $\Scom'$ and $\K'$ combinators is equivalent to the existence of $\B'$, $\C'$, $\K'$ and $\W'$ combinators, where $\B'$, $\C'$ and $\W'$ satisfy the identities from \autoref{tab:combinators} with any occurrences of $!$ omitted. Note that we can define $\I'\coloneqq \W'\K'$.

As shown by Abramsky, Haghverdi and Scott \cite{AbramskyHaghverdiScott2002LCAs,Haghverdi2000}, from any LCA $A = (A,\bullet,!)$, we may construct a CCA $\LCAtoPCA{A} \coloneq (A,\bullet_!)$ on the same underlying set where $a\bullet_! b \coloneqq a \mathrel{\bullet} {!b}$.

\subparagraph*{Combinatory completeness}

The key property of CCAs is their ability to mimic $\lambda$-abstraction. This property is known as \textbf{combinatory completeness}. LCAs enjoy their own form of combinatory completeness, which relates to the linear $\lambda$-calculus. These properties are extremely useful when exhibiting realizers for our model in \autoref{sec:linearrealizabilitymodel}. 

Let $A$ be a set and $V$ be a countably infinite set of variables. Polynomials over $A$ are given by the first three clauses of the following grammar; $!$-polynomials are given by the whole grammar.
\begin{align*}
    t, t' ::= x\in V \,\mid\, a\in A \,\mid\, t\bullet t' \,\mid\, !t
\end{align*}
We adopt the same conventions for ($!$-)polynomials as for applicative systems with $!$. 

Closed polynomials can be identified with elements of $A$. If $A$ comes with a $!:A \rightarrow A$, then closed $!$-polynomials can be identified with elements of $A$. Two ($!$-)polynomials $t,t'$ with variables in $\{x_1,...,x_n\}$ are considered equal when for all $a_1,...,a_n\in A$, $t[a_1,...,a_n/x_1,...,x_n] = t'[a_1,...,a_n/x_1,...,x_n]$.
A variable occurring in a $!$-polynomial $t$ is \textit{linear} iff it appears exactly once and not in scope of a $!$.
We write $\mathsf{vars}(t)$ for the set of variables occurring in a ($!$-)polynomial $t$, and $\mathsf{linvars}(t')$ for the set of linear variabels occurring in a $!$-polynomial $t'$.

\begin{proposition}[Curry \cite{Curry1930}]\label{thm:combinatorycompleteness-cartesian}
    The following are equivalent for an applicative system $(A,\bullet)$:
    \begin{itemize}
        \item $(A,\bullet)$ is a cartesian combinatory algebra.
        \item $(A,\bullet)$ is cartesian combinatory complete. That is, for every polynomial $t$ over $A$, there is a polynomial $\lambda^* x.\, t$ over $A$, with $\mathsf{vars}(\lambda^* x.\, t) = \mathsf{vars}(t)-\{x\}$, such that $(\lambda^* x.\, t)x = t$.
    \end{itemize}
\end{proposition}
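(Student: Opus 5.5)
We prove the two implications separately, the substantive one being that a cartesian combinatory algebra is combinatory complete, by the standard bracket-abstraction construction.

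For the direction from CCA to completeness, we define $\lambda^* x.\, t$ by structural recursion on the polynomial $t$. We set
\[
\lambda^* x.\, x \coloneqq \Scom'\K'\K',
\qquad
\lambda^* x.\, t \coloneqq \K' t \ \text{ if } x\notin\mathsf{vars}(t),
\qquad
\lambda^* x.\, (t\bullet t') \coloneqq \Scom'(\lambda^* x.\, t)(\lambda^* x.\, t') \ \text{ otherwise}.
\]
The second clause covers both constants $a\in A$ and variables $y\neq x$, as well as compound terms not containing $x$. The clauses are disjoint once we check the side condition first, so the definition is well-founded by induction on the size of $t$.

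Two facts are then proved simultaneously by induction on $t$.

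\textbf{Variables.} We show $\mathsf{vars}(\lambda^* x.\, t)=\mathsf{vars}(t)-\{x\}$.
\begin{itemize}
\item In the first case, $\Scom'\K'\K'$ is closed.
\item In the second case, $\K' t$ has exactly the variables of $t$, and $x$ is not among them.
\item In the third case, the claim follows from the induction hypotheses, since $\mathsf{vars}(t\bullet t')=\mathsf{vars}(t)\cup\mathsf{vars}(t')$.
\end{itemize}

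\textbf{$\beta$-equation.} We show $(\lambda^* x.\, t)x=t$ as polynomials, that is, after every substitution of elements of $A$ for the variables.
\begin{itemize}
\item For the first case, $\Scom'\K'\K' a = \K' a(\K' a) = a$.
\item For the second case, $\K' t a = t$, using the $\K'$ axiom applied to the value of $t$ under the substitution.
\item For the third case,
\[
\Scom'(\lambda^* x.\, t)(\lambda^* x.\, t')a = (\lambda^* x.\, t)a\,((\lambda^* x.\, t')a) = t[a/x]\,t'[a/x]
\]
by the induction hypothesis.
\end{itemize}

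One subtle point needs care: polynomial equality is defined pointwise under substitution. We must therefore check that substituting values for the other variables commutes with $\lambda^* x$. This holds because $\lambda^* x$ is built compositionally from the term and substitution does not touch $x$. Alternatively, we prove the more general statement that $(\lambda^* x.\, t)[\vec a/\vec y]\,b = t[\vec a/\vec y, b/x]$ for all $\vec a$ and $b$. I expect this bookkeeping to be the only real obstacle, and the strengthened statement makes the induction go through cleanly.

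For the converse direction, we assume combinatory completeness and take
\[
\K' \coloneqq \lambda^* x.\,\lambda^* y.\, x,
\qquad
\Scom' \coloneqq \lambda^* x.\,\lambda^* y.\,\lambda^* z.\, xz(yz).
\]
The variable condition guarantees that these are closed polynomials, and hence elements of $A$. Iterating the defining equation $(\lambda^* x.\, t)x=t$, read as pointwise equality under all substitutions, gives $\K' ab=a$ and $\Scom' abc=ac(bc)$. So $(A,\bullet)$ is a CCA.
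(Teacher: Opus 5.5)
Your proposal is correct. It is the standard bracket-abstraction argument: $\Scom'\K'\K'$ for the identity, $\K' t$ when $x$ does not occur, $\Scom'$ for applications, and the $\beta$-equation proved in its substitution-closed form. The converse via $\K' = \lambda^* x.\,\lambda^* y.\,x$ and $\Scom' = \lambda^* x.\,\lambda^* y.\,\lambda^* z.\,xz(yz)$, closed by the variable condition, is also right. The paper gives no proof of its own and simply cites Curry, whose result is the one your bracket-abstraction argument establishes.
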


\begin{proposition}[Simpson \cite{simpson2005reduction}]\label{thm:combinatorycompleteness-linear}
    Let $(A,\bullet)$ be an applicative system and $!:A \rightarrow A$. The following are equivalent:
    \begin{itemize}
        \item $(A,\bullet,!)$ is a linear combinatory algebra.
        \item $(A,\bullet,!)$ is linear combinatory complete. That is, for every $!$-polynomial $t$ over $A$:
        \begin{itemize}
            \item if $x\in\mathsf{linvars}(t)$, then there is a polynomial $\lambda^* x.\, t$ over $A$,
            with $\mathsf{vars}(\lambda^* x.\, t) = \mathsf{vars}(t)-\{x\}$
            and $\mathsf{linvars}(\lambda^* x.\, t) = \mathsf{linvars}(t)-\{x\}$,
            such that $(\lambda^* x.\, t)x = t$;
            \item there is a polynomial $\lambda!^* x.\, t$ over $A$,
            with $\mathsf{vars}(\lambda!^* x.\, t) = \mathsf{vars}(t)-\{x\}$
            and $\mathsf{linvars}(\lambda!^* x.\, t) = \mathsf{linvars}(t)-\{x\}$,
            such that $(\lambda!^* x.\, t)!x = t$.
        \end{itemize}
    \end{itemize}
\end{proposition}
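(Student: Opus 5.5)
The plan is to prove the two directions separately. The direction from linear combinatory completeness to being an LCA is routine: each combinator in \Cref{tab:combinators} is obtained as a closed abstraction of the corresponding $!$-polynomial. For instance, $\B \coloneqq \lambda^* a.\,\lambda^* b.\,\lambda^* c.\, a(bc)$, where $a,b,c$ are linear at each stage. Likewise $\W \coloneqq \lambda^* a.\,\lambda!^* b.\, a!b!b$, $\K \coloneqq \lambda^* a.\,\lambda!^* b.\, a$, $\D \coloneqq \lambda!^* a.\, a$, $\delta \coloneqq \lambda!^* a.\, !!a$ and $\F \coloneqq \lambda!^* a.\,\lambda!^* b.\, !(ab)$. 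In each case I will check that the variable being abstracted by $\lambda^*$ is indeed linear in the body, and then read off the required identity from $(\lambda^* x.\,t)x = t$ and $(\lambda!^* x.\,t)!x = t$.

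For the converse, I will define $\lambda^* x$ and $\lambda!^* x$ by simultaneous structural recursion on $t$, in the style of bracket abstraction.

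\textbf{Linear case} ($x$ occurs exactly once in $t$, not under $!$). Then $t$ is either $x$ itself, or an application $t_1 t_2$ with $x$ linear in exactly one of $t_1,t_2$. Set $\lambda^* x.\,x \coloneqq \I$. If $x\in t_1$, set $\lambda^* x.\,t_1t_2 \coloneqq \C(\lambda^* x.\,t_1)t_2$. If $x\in t_2$, set $\lambda^* x.\,t_1t_2 \coloneqq \B t_1(\lambda^* x.\,t_2)$. The equations then follow from the $\B,\C,\I$ identities. The variable bookkeeping is immediate: no variable is duplicated and nothing new is placed under $!$, so $\mathsf{vars}$ and $\mathsf{linvars}$ each lose exactly $x$.

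\textbf{Nonlinear case} ($\lambda!^*$). The cases are:
\begin{itemize}
\item If $x\notin t$, take $\K t$, since $\K t!x = t$.
\item If $t = x$, take $\D$.
\item If $t = t_1t_2$, combine the two abstractions with a duplicator. A suitable choice is $\lambda!^* x.\,t_1t_2 \coloneqq \W(\B(\C\B'')\ldots)$. Concretely, I will build a combinator $\Scom_! $ with $\Scom_!\, f\, g\, !x = f!x(g!x)$, assembled from $\W$, $\B$ and $\C$ via the linear case already established.
\item If $t = !s$, I need some $u$ with $u!x = !s$. Take $u \coloneqq \B(\F!(\lambda!^* x.\,s))\delta$. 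Then $u!x = \F!(\lambda!^*x.s)(!!x) = !((\lambda!^*x.s)!x) = !s$.
\end{itemize}
In the last case, $\lambda!^*x.s$ sits under $!$. This is exactly why its variables become non-linear, and it is consistent with the bookkeeping, since those variables were already under $!$ in $t$.

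The main obstacle is the bookkeeping for the $\mathsf{linvars}$ condition in the application case. There $x$ may occur in both $t_1$ and $t_2$, or in neither, while other linear variables must remain linear, so the combinators must not copy the $t_i$. To handle this, I will use $\W$ only on the $!x$ argument and keep $t_1,t_2$ in linear positions, splitting into subcases by whether $x$ occurs in $t_1$, in $t_2$, or in both, to avoid unnecessary weakening. I will then verify by a simultaneous induction that the constructed terms satisfy both the defining equations and the $\mathsf{vars}$/$\mathsf{linvars}$ equations.
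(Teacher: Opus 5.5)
Your proposal is correct. It is the standard bracket-abstraction argument behind the cited result of Simpson; the paper itself gives no proof. Note that the abstractions you build are necessarily $!$-polynomials rather than plain polynomials, which is the intended reading of the statement.

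The one thing to tidy is the garbled term in the application case. Simply take $\Scom_! \coloneqq \lambda^* f.\,\lambda^* g.\,\W(\lambda^* p.\,\lambda^* q.\, f\,p\,(g\,q))$ and set $\lambda!^* x.\, t_1t_2 \coloneqq \Scom_!\,(\lambda!^* x.\,t_1)\,(\lambda!^* x.\,t_2)$ uniformly. The $\mathsf{vars}/\mathsf{linvars}$ bookkeeping then works without any subcase split.
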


\section{Linear realizability model}
\label{sec:linearrealizabilitymodel}
In this section, 
we construct the realizability model of linear dependent type theory. 
Traditional approaches to realizability start with a (cartesian) combinatory algebra,
and then a model of Martin-L\"of type theory is constructed from assemblies over that CCA~\cite{luo:1990,reus:1999a}.
Such a model is represented using some categorical structure,
and often, comprehension categories are used~\cite{jacobs:2001}.
We set out to do the same, but for LDTT using linear combinatory algebras.
In order to model LDTT,
we use a more refined categorical structure,
namely \textbf{linear comprehension categories},
which are based on work by Lundfall~\cite{lundfall:2017,Lundfall2018DiagramModel}.

We start this section by recalling comprehension categories (\Cref{def:comp-cat}) and the realizability model of Martin-L\"of type theory (\Cref{exa:assembly-comp-cat}).
After that,
we introduce linear comprehension categories (\Cref{def:linear-comp-cat}).
Throughout this section we show that every LCA gives rise to a linear comprehension categories, called the \textbf{linear realizability model}.
We discuss the relevant features step by step.
Specifically,
we first discuss the interpretation of cartesian types (\Cref{exa:assembly-comp-cat})
and linear types (\Cref{exa:monoidal-fibration-lca}),
and the modalities (\Cref{exa:lca-linear-comp-cat}).
We finish by interpreting various type formers (\Cref{exa:linear-realizability-types}),
including equalizer types (\Cref{exa:linear-realizability-equalizers}) and an impredicative universe (\Cref{exa:impred-universe}).




\subsection{Realizability models of type theory}
We begin with a brief overview of comprehension categories
and the usual realizability model of dependent type theory.
The key ingredient for comprehension categories is the notion of a \emph{Grothendieck fibration}~\cite{grothendieck:1960},
which is defined as follows.

\begin{definition}
\label{def:fibration}
Let $\fib : \Total \rightarrow \Base$ be a functor,
and let $f : x \rightarrow y$ be a morphism in $\Total$.
We say that $f$ is \textbf{cartesian}
if for all objects $z \in \Total$
and morphisms $h : z \rightarrow y$ and $g : \fib(z) \rightarrow \fib(x)$
such that $\fib(h) = \fib(f) \circ g$,
there is a unique $k : z \rightarrow y$ with $\fib(k) = g$
making the resulting triangle commute.
We say that $\fib$ is a \textbf{(Grothendieck) fibration}
if for all $f : x \rightarrow \fib(y)$ in $\Base$
there is an object $\xx \in \Total$ with $\fib(\xx) = x$
and a cartesian morphism $\ff : \xx \rightarrow y$
with $\fib(\ff) = f$.
\end{definition}

If we have a fibration $\fib : \Total \rightarrow \Base$,
then we can think of $\Base$ and $\Total$ as the categories of contexts and of types respectively.
Given an object $x \in \Base$,
we denote the fibre category of $\fib$ along $x$ by $\fiber{\fib}{x}$.
Every morphism $s : x \rightarrow y$ gives rise to a functor $\subst{s} : \fiber{\fib}{y} \rightarrow \fiber{\fib}{x}$,
which gives us a substitution operation on types.

Comprehension categories~\cite{jacobs:2001} combine fibrations with a comprehension functor,
which is used to interpret context extension.
We define them as follows.

\begin{definition}
\label{def:comp-cat}
A \textbf{comprehension category} is a strictly commuting diagram of functors as indicated below.
\[
\begin{tikzcd}
    \Total && {\Arrow{\Base}} \\
    & \Base
	\arrow["\comp", from=1-1, to=1-3]
	\arrow["\fib"', from=1-1, to=2-2]
	\arrow["\cod", from=1-3, to=2-2]
\end{tikzcd}
\]
We assume that $\fib$ is a fibration
and that $\comp$ preserves cartesian morphisms.
We denote such a comprehension category by $\comp : \Total \rightarrow \Arrow{\Base}$.
We say that $\comp : \Total \rightarrow \Arrow{\Base}$ is \textbf{full} if $\comp$ is fully faithful.
\end{definition}

If we have $\xx \in \Total$ with $\fib(\xx) = x$,
then we denote $\comp(\xx)$ by $\proj{\xx} : \CtxExt{x}{\xx} \rightarrow x$.
The fact that $\comp$ preserves cartesian morphisms
means that it maps each cartesian lift to a pullback square.

\begin{remark}
Throughout this work we only consider full comprehension categories.
If a comprehension category is not full,
then it comes with additional structure,
namely morphisms between types.
Such structure gives an interpretation of subtyping~\cite{coraglia:2024a,najmaei:2026},
which usually is not present in Martin-L\"of type theory.
Since we are only interested in models without subtyping,
we only consider full comprehension categories.

In addition,
we only look at comprehension categories for which $\fib$ has a fibrewise terminal object
and for which $\comp$ preserves terminal objects.
In such comprehension categories,
terms of type $\xx$ in context $x$ are the same as
morphisms in $\fiber{\fib}{x}$ from the terminal object to $x$.
\end{remark}

The interpretation of dependent products and sums are based on the idea by Lawvere
that quantifiers are adjoints~\cite{jacobs:2001,lawvere:2006}.

\begin{definition}
\label{def:quantifiers}
We say that a full comprehension category $\comp : \Total \rightarrow \Arrow{\Base}$ supports \textbf{$\sum$-types}
if for all for all $s : x \rightarrow y$ the functor $\subst{s}$ has a left adjoint
satisfying the usual Beck-Chevalley condition.
We denote this left adjoint by $\sum$.
If $\comp : \Total \rightarrow \Arrow{\Base}$ supports $\sum$-types,
then we say that these $\sum$-types are \textbf{strong}
if for all objects $x \in \Base$, $\xx \in \fiber{\fib}{x}$ and $\yy \in \fiber{\fib}{\CtxExt{x}{\xx}}$
the canonical morphism from $\CtxExt{\CtxExt{x}{\xx}}{\yy}$ to $\CtxExt{x}{\sum_{\xx} \yy}$ is an isomorphism.

A comprehension category $\comp : \Total \rightarrow \Arrow{\Base}$ supports \textbf{$\prod$-types}
if for all for all $s : x \rightarrow y$ the functor $\subst{s}$ has a right satisfying the usual Beck-Chevalley condition.
\end{definition}

\begin{remark}
There are some important observations to be made about \Cref{def:quantifiers}.
While one generally only considers adjoints to the weakening functors $\subst{\proj{\xx}}$,
we look at adjoints for all morphisms.
As a consequence, \Cref{def:quantifiers} captures not only $\sum$-types,
but also extensional identity types,
which are given by left adjoints to the contraction functors.
In addition,
one only needs to check the Beck-Chevalley condition for $\sum$-types,
because the one for $\prod$-types follows from it \cite{seely:1983}.
Finally,
we require our $\sum$-types to be strong in order to interpret the $\eta$-rule for dependent sums.
\end{remark}

Now we construct the realizability model of dependent type theory~\cite{luo:1990,reus:1999a}.

\begin{example}
\label{exa:assembly-comp-cat}
We show that each cartesian combinatory algebra $A$ gives rise to a model of extensional type theory with $\prod$ and $\sum$-types.
First, we define the category of contexts.
Recall that an assembly over $A$ is a pair of a set $X$ together with a relation ${\realize } \subseteq A \times X$ such that for each $x \in X$ there exists $a \in A$ with $a \realize x$.
We call $\realize$ the ``realizability relation'' for $X$.
Given assemblies $X$ and $Y$,
an assembly morphism is given by a function $f : X \rightarrow Y$
for which there exists $e\in A$ such that for each $a\in A$ with $a \realize x$
we have $e a \realize f(x)$.
We denote by $\Asm{A}$
the category whose objects are assemblies over $A$
and whose morphisms are assembly morphisms.
In fact, we can construct a category of assemblies for any applicative structure with a $\B$ and $\I$ combinator.

Next we define the category $\DAsm{A}$ of types.
Objects of this category are \textbf{families of assemblies},
which are pairs of an assembly $X$
together with an assembly $Y_x$ for each $x \in X$.
A morphism from $(X, Y)$ to $(X', Y')$
consists of an assembly morphism $f : X \rightarrow X'$
together with a map $g_x : Y_x \rightarrow Y_{f(x)}$ for each $x$
for which there exists $e \in A$
such that $e a b \realize g_x(y)$ whenever $a \realize x$ and $b \realize y$.
The first projection $\DAsmPr : \DAsm{A} \rightarrow \Asm{A}$ is a fibration.

Finally, we define a functor $\total : \DAsm{A} \rightarrow \Arrow{\Asm{A}}$.
If we have an assembly $X$ together with assemblies $Y_x$ for $x \in X$,
then we define $\total_{x : X} Y_x$ to be the assembly whose underlying set is $\sum_{x : X} Y_x$.
All in all, we get the following comprehension category.
\[
\begin{tikzcd}
	{\DAsm{A}} && {\Arrow{\Asm{A}}} \\
	& {\Asm{A}}
	\arrow["\total", from=1-1, to=1-3]
	\arrow["\DAsmPr"', from=1-1, to=2-2]
	\arrow["\cod", from=1-3, to=2-2]
\end{tikzcd}
\]
This comprehension supports $\sum$-types and $\prod$-types,
and the functor $\total$ is fully faithful and preserves terminal objects.
In addition,
each fibre category $\fiber{\DAsmPr}{x}$ has a terminal object.
\end{example}

\subsection{Linear comprehension categories}
Next we extend our notion of comprehension category to incorporate the features of linear type theory,
and we define linear comprehension categories.
There are two additional features to such comprehension categories.
The first feature is that
they come with two notion of types,
namely cartesian types and linear types.
We represent these two kinds via two fibrations over the same category.
To incorporate linearity,
we require one of these fibrations to be \textbf{symmetric monoidal closed}.

\begin{definition}
\label{def:monoidal-fibration}
Let $\fibM : \Total \rightarrow \Base$ be a fibration.
We say that $\fib$ is a \textbf{symmetric monoidal closed fibration}
if for each $x \in \Base$ the fibre $\fiber{\fibM}{x}$ is a symmetric monoidal closed category
and if for each $s : x \rightarrow y$ the functor $\subst{s} : \fiber{\fibM}{y} \rightarrow \fiber{\fibM}{x}$ is a symmetric monoidal closed functor.
\end{definition}

Note that there are various notions of monoidal fibrations in the literature \cite{moeller:2020,shulman:2008}
depending on whether the total category or the fibre categories have a monoidal structure.
For our purposes,
it is necessary to require a monoidal structure for the fibre categories.

From a type theoretic point of view,
we can understand symmetric monoidal closed fibrations as follows.
Each fibre category $\fiber{\fibM}{x}$ is seen as the category of types
in some context $x$.
The monoidal structures equip these categories with three type formers:
a unit, a monoidal product, and linear function types,
and thus each $\fiber{\fibM}{x}$ forms a model of linear logic.
Since the functors $\subst{s} : \fiber{\fibM}{y} \rightarrow \fiber{\fibM}{x}$ are symmetric monoidal closed,
the aforementioned type formers are stable under substitution up to isomorphism.

Using linear combinatory algebras,
we construct an example of a monoidal fibration. 

\begin{example}
\label{exa:monoidal-fibration-lca}
Let $\LCA$ be an LCA.
We define the category $\LAsm{\LCA}$ of families of linear assemblies over $\LCA$
and morphisms between them.
A \textbf{family of linear assemblies}
over $\LCA$ is a pair consisting of an assembly $X \in \Asm{\LCAtoPCA{\LCA}}$ over the CCA $\LCAtoPCA{A}$ associated to the LCA $\LCA$, together with an
assembly $Y_x : \Asm{\LCA}$ over the LCA $\LCA$ for each $x \in X$.
A morphism from $(X, Y)$ to $(X', Y')$
consists of a morphism $f : X \rightarrow X'$ in $\Asm{\LCAtoPCA{A}}$,
together with a map $g_x : Y_x \rightarrow Y_{f(x)}$ for each $x$, for which there exists $e \in A$
such that $e {\expL{a}} b \realize g_x(y)$ whenever $a \realize x$ and $b \realize y$.
The functor $\LAsmPr : \LAsm{\LCA} \rightarrow \Asm{\LCAtoPCA{\LCA}}$ is the first projection.

The functor $\LAsmPr$ is a symmetric monoidal closed fibration.
The unit for the monoidal structure on $\fiber{\LAsmPr}{X}$ is given by the unit set with $\I$ as the only realizer.
The monoidal product of linear assemblies $Y$ and $Y'$
has $Y(x) \times Y'(x)$ for $x \in X$ as carrier,
and we say that $a \realize (y, y')$ if there are $b, b' \in \LCA$
such that $b \realize y$, $b' \realize y'$,
and $a = \lambda^* x .\, x b b'$.
\end{example}


The second key feature of linear comprehension categories lies with the interpretation of linear exponentiation.
There are various ways to represent this connective categorically,
and our approach is based on linear non-linear models \cite{Benton1994LNL}.
Specifically,
we require that every linear comprehension category comes with an adjunction between the category of linear types and of cartesian types.

\begin{definition}
\label{def:linear-comp-cat}
A \textbf{linear comprehension category} is given by a strictly commuting diagram of functors as follows.
\[
\begin{tikzcd}[column sep = huge, row sep = large]
	\TotalL & \Total & {\Arrow{\Base}} \\
	& \Base
	\arrow[""{name=0, anchor=center, inner sep=0}, "\linM"', bend right=20, from=1-1, to=1-2]
	\arrow["\fibM"', bend right=20, from=1-1, to=2-2]
	\arrow[""{name=1, anchor=center, inner sep=0}, "\linL"', bend right=20, from=1-2, to=1-1]
	\arrow["\comp", from=1-2, to=1-3]
	\arrow["\fib"{description}, from=1-2, to=2-2]
	\arrow["\cod", bend left=20, from=1-3, to=2-2]
	\arrow["\dashv"{anchor=center, rotate=-90}, draw=none, from=0, to=1]
\end{tikzcd}
\]
We assume that $\fib$ is a fibration,
and that $\Total$ has fibrewise terminal objects and binary products.
We require $\comp$ to be fully faithful
and to preserve cartesian morphisms and terminal objects.
We assume that $\fibM$ is a symmetric monoidal closed fibration
and that $\linL \dashv \linM$ is a fibrewise symmetric monoidal adjunction.
Finally,
$\linL$ and $\linM$ need to preserve cartesian morphisms.
\end{definition}

The interpretation of linear dependent type theory in linear comprehension categories is indicated in \Cref{tab:interpretation}.
The most important observation to make is the interpretation of linear terms,
which are interpreted as morphisms in $\TotalL$,
whereas terms of cartesian types are interpreted as sections of projections.

\begin{table}[t!]
\centering
\begin{tabular}{||l l l||}
\hline
Syntax         & Linear Comprehension Category & Realizability Model                          \\
[0.5ex] 
\hline\hline
Context        & Object in $\Base$             & Assembly over  $\LCAtoPCA{\LCA}$            \\
Cartesian type & Object in $\Total$            & Family of assemblies over $\LCAtoPCA{\LCA}$ \\
Term of type $X$           & Section of $\proj{X}$            & Section of $\proj{X}$                           \\
Linear type    & Object in $\TotalL$           & Family of assemblies over $\LCA$             \\
Linear term    & Morphism in $\TotalL$         & Morphism of linear assemblies \\ [1ex] \hline
\end{tabular}
\caption{The interpretation of linear dependent type theory in linear comprehension categories, and in our realizability model in particular}
\label{tab:interpretation}
\end{table}

The functors $\linL$ and $\linM$ interpret the modalities in linear logic,
and their composition gives linear exponentiation.
Here it is important to note that $\linM$ also acts on linear terms,
but only those that use no linear variables.
This restriction also makes sense from a semantical point of view.
Suppose that we have some type $\xx \in \fiber{\fib}{x}$.
Recall that terms of type $\xx$ can equivalently be represented
as a morphism from the terminal object to $\xx$.
Since $\linM$ is strong monoidal,
linear terms in the empty linear context correspond to terms of type $\linM(\xx)$.

To construct examples of linear comprehension categories,
it is helpful to use the following characterization of symmetric monoidal adjunctions.

\begin{proposition}[{\cite[Lemma 13]{mellies:2003}}]
An adjunction $L \dashv R$ between symmetric monoidal categories $\mathcal{C}$ and $\mathcal{D}$ is symmetric monoidal if and only if $L$ is a strong monoidal functor.
\end{proposition}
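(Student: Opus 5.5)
We prove the statement by doctrinal adjunction: a monoidal structure on one adjoint induces a mate structure on the other. Throughout, write $\eta : 1 \Rightarrow RL$ and $\varepsilon : LR \Rightarrow 1$ for the unit and counit. A symmetric monoidal adjunction is an adjunction in which both $L$ and $R$ are lax symmetric monoidal and $\eta$ and $\varepsilon$ are monoidal natural transformations.

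\emph{Forward direction.} Suppose the adjunction is symmetric monoidal, with lax structures $\phi^L, \phi^L_0$ on $L$ and $\phi^R, \phi^R_0$ on $R$. We build candidate inverses to $\phi^L_{a,b} : La \tensor Lb \to L(a \tensor b)$ and $\phi^L_0 : I \to LI$ by transposing the structure of $R$. Set
\[
\psi_{a,b} \coloneqq \varepsilon_{La \tensor Lb} \circ L\bigl(\phi^R_{La,Lb} \circ (\eta_a \tensor \eta_b)\bigr),
\]
which goes $L(a\tensor b) \to La \tensor Lb$. Similarly, $\psi_0 \coloneqq \varepsilon_I \circ L\phi^R_0 : LI \to I$.

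To check that $\psi \circ \phi^L = 1$ and $\phi^L \circ \psi = 1$, we transpose each composite across the adjunction. We then use three facts: monoidality of $\eta$, which says $\eta_{a\tensor b} = \phi^R_{La,Lb} \circ (\eta_a \tensor \eta_b)$ composed with $R\phi^L$ appropriately; monoidality of $\varepsilon$; and the triangle identities. This bookkeeping is the main work. It reduces to pasting the monoidality squares of $\eta$ and $\varepsilon$ against naturality.

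\emph{Converse direction.} Suppose $L$ is strong monoidal. Define the lax structure on $R$ as the mate of $(\phi^L)^{-1}$:
\[
\phi^R_{c,d} \coloneqq R(\varepsilon_c \tensor \varepsilon_d) \circ R\bigl((\phi^L_{Rc,Rd})^{-1}\bigr) \circ \eta_{Rc \tensor Rd},
\]
which goes $Rc \tensor Rd \to R(c\tensor d)$, and set $\phi^R_0 \coloneqq R\bigl((\phi^L_0)^{-1}\bigr) \circ \eta_I$.

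We then verify four things in order:
\begin{enumerate}
\item Naturality of $\phi^R$. This is immediate from naturality of $\eta$, $\varepsilon$ and $\phi^L$.
\item The associativity, unit and symmetry axioms for $(R,\phi^R)$. Transpose each required equation to a morphism out of $L(Rc \tensor Rd \tensor Re)$, or the analogous object. The equation then becomes the corresponding axiom for the strong structure $(\phi^L)^{-1}$, which holds because the inverse of a strong symmetric monoidal structure is an oplax symmetric monoidal structure.
\item Monoidality of $\eta$.
\item Monoidality of $\varepsilon$. Items 3 and 4 are again checked after transposition, using the triangle identities.
\end{enumerate}

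\emph{Main obstacle.} The hardest part is the coherence check that $\phi^R$ satisfies associativity, since it involves the largest diagram. The key simplification is to always transpose to maps out of an $L$-image. There, the equation collapses to the associativity pentagon for $(\phi^L)^{-1}$ together with naturality of $\varepsilon$. This is the standard argument of Kelly's doctrinal adjunction, which is where the cited lemma comes from, so we can follow that template step by step.
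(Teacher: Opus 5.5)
Your proposal is correct and takes essentially the route of the cited result (the paper itself gives no proof beyond the reference to Melli\`es): Kelly's doctrinal-adjunction argument, where the inverse of $\phi^L$ is obtained as the mate of $\phi^R$ and, conversely, $\phi^R$ is defined as the mate of $(\phi^L)^{-1}$. One small correction: $\psi_{a,b}\circ\phi^L_{a,b}$ is an endomorphism of $La\otimes Lb$, which is not an $L$-image, so it cannot be checked by transposing. Instead, use naturality of $\phi^L$ to rewrite it as $\varepsilon_{La\otimes Lb}\circ L\phi^R_{La,Lb}\circ\phi^L_{RLa,RLb}\circ(L\eta_a\otimes L\eta_b)$, apply monoidality of $\varepsilon$ at $(La,Lb)$, and finish with the triangle identity $\varepsilon_{La}\circ L\eta_a = 1_{La}$; transposition, combined with monoidality of $\eta$, is only needed for $\phi^L\circ\psi$.
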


\begin{example}
\label{exa:lca-linear-comp-cat}
The functor $\linM : \LAsm{\LCA} \rightarrow \DAsm{\LCAtoPCA{\LCA}}$ sends every a linear assembly $(X, Y)$ to the family $Y_x$ of assemblies for each $x \in X$.
Note that this definition makes sense,
because $\LCA$ and $\LCAtoPCA{\LCA}$ have the same underlying set of elements.
Next we define the functor $\linL : \DAsm{\LCAtoPCA{\LCA}} \rightarrow \LAsm{\LCA}$.
Suppose that we have a family of assemblies $Y_x$ for each $x \in X$.
For each $x \in X$, we define an assembly $Y_x'$ over $\LCA$,
whose underlying set is $Y_x$
and where $a \realize y$ if there is $b \in A$
such that $b \realize y$ and $a = \expL{b}$.
The functor $\linL$ maps the family $Y_x$
to the linear assembly $Y_x'$.
We arrive at the following linear comprehension category, where the cartesian side is \autoref{exa:assembly-comp-cat} instantiated with the CCA $\LCAtoPCA{A}$.
\[
\begin{tikzcd}[column sep = huge, row sep = large]
	{\LAsm{\LCA}} & {\DAsm{\LCAtoPCA{\LCA}}} & {\Arrow{\Asm{\LCAtoPCA{\LCA}}}} \\
	& {\Asm{\LCAtoPCA{\LCA}}}
	\arrow[""{name=0, anchor=center, inner sep=0}, "\AsmM"', bend right=15, from=1-1, to=1-2]
	\arrow[""{name=1, anchor=center, inner sep=0}, "\AsmL"', bend right=15, from=1-2, to=1-1]
    \arrow["\LAsmPr"', bend right=20, from=1-1, to=2-2]
	\arrow["\total", from=1-2, to=1-3]
	\arrow["\DAsmPr"{description}, from=1-2, to=2-2]
	\arrow["\cod", bend left=20, from=1-3, to=2-2]
    \arrow["\dashv"{anchor=center, rotate=-90}, draw=none, from=0, to=1]
\end{tikzcd}
\]
Both $\AsmM$ and $\AsmL$ are faithful.
This examples generalizes Hoshino's~\cite{hoshino:2007} linear-non-linear model to dependent types.
\end{example}

Before we look at type formers in the linear realizability model,
we first explicitly describe (linear) types and terms in that model.
Objects and morphisms in $\Asm{\LCAtoPCA{\LCA}}$ 
represent contexts and substitutions respectively.
Note that we use the combinatory algebra $\LCAtoPCA{\LCA}$,
because linear types depend intuitionistically on their variables.
Objects in $\DAsm{\LCAtoPCA{\LCA}}$ give us cartesian types,
and objects in $\LAsm{\LCA}$ give us linear types.
While both $\LAsm{\LCA}$ and $\DAsm{\LCAtoPCA{\LCA}}$ are fibrewise monoidal categories,
only $\DAsm{\LCAtoPCA{\LCA}}$ is guaranteed to be cartesian monoidal.
For this reason, $\LAsm{\LCA}$ only supports linear logic in general.

\subsection{Type formers in linear comprehension categories}
Finally,
we look at type formers in linear comprehension categories,
and in the linear realizability model in particular.
We start with the linear quantifiers,
which we interpret as adjoints of substitution.

\begin{definition}
We say that a linear comprehension category supports \textbf{linear dependent products} if for each $s : x \rightarrow y$ in $\Base$
the functor $\subst{s} : \fiber{\fibM}{y} \rightarrow \fiber{\fibM}{x}$
has a right adjoint satisfying the usual Beck-Chevalley condition.

A linear comprehension category supports \textbf{linear dependent sums}  if for each $s : x \rightarrow y$ in base
the functor $\subst{s} : \fiber{\fibM}{y} \rightarrow \fiber{\fibM}{x}$
has a left adjoint satisfying the usual Beck-Chevalley condition.
The left adjoint of $\subst{s}$ is denoted by $\LSigSym_{s}$.
We also require Frobenius reciprocity:
if we have a morphism $s : x \rightarrow y$
and objects $\xx : \fiber{\fibM}{x}$ and $\yy : \fiber{\fibM}{y}$,
then the canonical morphism
from $\LSig{s}{(\subst{s}(y) \tensor \xx)}$
to $\yy \tensor \LSig{s}{\xx}$
is an isomorphism.
\end{definition}

\begin{example}
\label{exa:linear-realizability-types}
The linear realizability model defined in \Cref{exa:lca-linear-comp-cat} supports linear dependent products and sums.
We only discuss the construction of linear dependent products.
Suppose that we have a morphism $s : X \rightarrow X'$ in $\Asm{\LCAtoPCA{\LCA}}$
and that for each $x \in X$
we have an assembly $Y_x$ over $\LCA$.
Let $x' \in X'$
and let $f$ be function that assigns to each $x \in s^{-1}(x')$
an element $f(x) \in Y_x$.
We say that $f$ is tracked by $a$
if $a {\expL{b}} \realize f(x)$
whenever $b \realize x$.
The carrier of $\LPi{s}{Y}$
is defined to be the set of all such $f$ that is tracked by some $a$,
and we say that $a \realize f$ if $a$ tracks $f$.
\end{example}

Next we look at equalizer types.
We interpret these using fibrewise limits.

\begin{definition}
Let $\fib : \Total \rightarrow \Base$ be a fibration.
We say that $\fib$ has a \textbf{fibrewise equalizers}
if each fibre $\fiber{\fib}{x}$ has equalizers
and if the functors $\subst{s} : \fiber{\fib}{y} \rightarrow \fiber{\fib}{x}$ preserve equalizers.
We say that a linear comprehension category supports \textbf{equalizer types}
if the fibration $\fibM : \TotalL \rightarrow \Base$ has fibrewise equalizers.
\end{definition}

If $\subst{s}$ has a left and right adjoint, it preserves limits and colimits, hence it suffices to check that each $\fiber{\fib}{x}$ has equalizers.
\begin{example}
\label{exa:linear-realizability-equalizers}
The linear comprehension category defined in \Cref{exa:lca-linear-comp-cat} supports equalizer types.
Equalizers in the fibre $\fiber{\LAsmPr}{x}$ are constructed in the usual way.
\end{example}

The additive connectives of linear logic are treated in the same way.
For instance, the additive conjunction is interpreted via fibrewise binary products.
Since we do not use these connectives in the remainder,
we do not discuss them in detail.

We finish our development by showing that our model supports an impredicative universe of types.
In realizability,
such universes build forth upon the notion of \emph{modest set},
and the key theorem is that modest sets are equivalent to \emph{partial equivalence relations}.
Note that this theorem holds for BI-algebras,
which are applicative structures that come with a $\B$ combinator and an $\I$ combinator.
This generality is important in our development,
because it allows us to apply it to both linear and cartesian types in our realizability model.

\begin{definition}
\label{def:modest-set}
Let $X$ be an assembly for a BI-algebra $A$.
We say that $X$ is \textbf{modest}
if we have $x = y$ whenever $a \realize x$ and $a \realize y$ for some $a$.
A \textbf{partial equivalence relation} (PER) is a relation on $A$ that is symmetric and transitive.
\end{definition}

\begin{theorem}[Hyland \cite{hyland:1988}]
\label{thm:modest-set-equiv}
Let $A$ be a BI-algebra.
The categories of modest sets over $A$ and of partial equivalence relations over $A$ are equivalent.
\end{theorem}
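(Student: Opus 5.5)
I would prove the equivalence by writing down explicit functors in both directions and checking that the two composites are naturally isomorphic to the identities. First I fix what the category of PERs is. Its objects are partial equivalence relations $R$ on $A$. Write $\mathrm{dom}(R) = \{a \mid a \mathrel{R} a\}$. A morphism $R \to S$ is an equivalence class of elements $e \in A$ such that $a \mathrel{R} a'$ implies $ea \mathrel{S} ea'$. Two such $e, e'$ are identified when $a \mathrel{R} a'$ implies $ea \mathrel{S} e'a'$. Identities are represented by $\I$. The composite of $[e]$ and $[e']$ is represented by $\B e' e$, since $\B e' e a = e'(ea)$. Associativity and the unit laws hold because everything is checked only up to the equivalence on representatives. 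This uses only $\B$ and $\I$, which is why the BI-algebra hypothesis suffices.

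\emph{From PERs to modest sets.} Send $R$ to the assembly whose carrier is the quotient $\mathrm{dom}(R)/R$, with $a \realize [b]$ iff $a \mathrel{R} b$. Every class has a realizer. The assembly is modest because an element $a$ lies in exactly one class. A morphism $[e]$ induces the function $[a] \mapsto [ea]$. This is well defined by the morphism condition, is tracked by $e$, and does not depend on the representative of $[e]$. Functoriality is immediate.

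\emph{From modest sets to PERs.} Send a modest set $X$ to the relation $a \mathrel{R_X} a'$ iff there is an $x$ with $a \realize x$ and $a' \realize x$. Symmetry is clear. Transitivity is exactly where modesty is used: if $a \realize x$, $a' \realize x$, $a' \realize y$ and $a'' \realize y$, then modesty gives $x = y$. An assembly morphism $f$ tracked by $e$ is sent to $[e]$. The morphism condition follows from the tracking property. This is independent of the choice of tracker, because two trackers $e, e'$ both send realizers of $x$ to realizers of the single point $f(x)$, so they are $R_Y$-related.

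\emph{The composites.} The composite on PERs gives back $R$ on the nose. The composite on a modest set $X$ gives the assembly $\mathrm{dom}(R_X)/R_X$. It maps to $X$ by sending $[a]$ to the unique $x$ realized by $a$; uniqueness is again modesty. This map is bijective because every $x$ has a realizer, and both it and its inverse are tracked by $\I$, so it is an isomorphism of assemblies. Naturality is a direct unfolding.

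The main obstacle is keeping the bookkeeping of the morphism equivalence straight. One must check that the quotient of trackers matches equality of assembly morphisms, i.e.\ that two trackers define the same function iff they are related in the PER hom-set. Equivalently, this is showing that the modest-sets-to-PERs functor is fully faithful. Fullness needs the induced function to be realized, which holds by definition. Faithfulness needs that related trackers give equal functions, and this uses modesty of the codomain once more.
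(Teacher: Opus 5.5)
Your proof is correct. It is the standard argument for this result; the paper gives no proof of its own and instead cites Hyland and the existing formalization for PCAs. You build explicit functors in both directions, the PER-side composite is the identity, and the map $\mathrm{dom}(R_X)/R_X \to X$ is tracked by $\I$ in both directions. Only $\B$ and $\I$ are needed for identities and composition, which is why the BI-algebra hypothesis suffices.

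Two cosmetic points. An element of $A$ lies in at most one $R$-class, not exactly one, since elements outside $\mathrm{dom}(R)$ lie in none. Your closing discussion of full faithfulness is redundant once you have both natural isomorphisms.
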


Note that \Cref{thm:modest-set-equiv} has been formalized before for partial combinatory algebras~\cite{chhabra:2024}.
The importance of \Cref{thm:modest-set-equiv} comes from that fact that 
we can represent the large collection of modest sets via the small set
of PERs.
Specifically,
every modest set is isomorphic to one induced by a PER.
This equivalence forms the basis of the impredicative universe in our realizability model.

\begin{example}
\label{exa:impred-universe}
Given an assembly $X \in \Asm{\LCAtoPCA{\LCA}}$,
we define the family $\Univ$ over $X$ to be the set of partial equivalence relations over $\LCA$.
If $R$ is a PER,
then $a \realize R$ always holds.
Let $t$ be a term of type $\Univ$ in context $X$.
Since every PER gives rise to a modest set,
$t$ induces a cartesian type $\ElC{t}$ in context $X$,
and we define the linear $\ElL{t}$ in the same way.

The fact that $\Univ$ is an impredicative universe
follows from two observations.
First,
we note that by \Cref{thm:modest-set-equiv} every modest set is isomorphic
to a modest set induced by a PER.
Second,
modest sets are closed under both dependent products and linear dependent products.
Specifically,
if we have assemblies $Y_x$ and $Z_y$ for each $x : X$
and $y : \total{Y}$,
then their dependent product is modest if each $Z_y$ is modest.
One can prove a similar statement for linear dependent products.
\end{example}

\section{Impredicative encodings}\label{sec:encodings}

Finally,
we show how one can use impredicative encodings in linear dependent type theory
to construct initial algebras.
Recall that in second-order dependent type theory with an impredicative universe $\Univ$,
one can encode inductive data types via their recursion principle~\cite{girard:1989}.
For example,
if we have some type $A$,
then we define the type of lists over $A$ to be
\[
\prod_{X : \Univ} X \to (A \rightarrow X \to X) \to X
\]
Since we assume $\Univ$ to be an impredicative universe,
this type lives in $\Univ$.
One can show that this type satisfies the recursion principle and computation rules for lists definitionally.

However,
such an impredicative encoding does not necessarily give rise to an initial algebra \cite{geuvers:2001,rummelhoff:2004},
which means that one cannot prove a suitable induction principle for this type.
There are various ways to solve this problem,
and one solution was given by Awodey, Frey, and Speight \cite{awodey:2018}.
Their idea is that a suitable refinement of this impredicative encoding satisfies the induction principle.
Specifically, the desired initial algebra can be constructed as a subtype of the given impredicative encoding.
This idea can be applied in a wide variety of settings.
For example, it works for higher inductive types, even without truncation restrictions on the eliminator \cite{ripollecheveste:2023}. Also, one can construct coinductive types \cite{bronsveld:2025}.

In the remainder of this section,
we adapt the ideas of Awodey, Frey, and Speight to construct initial algebras in linear dependent type theory.
Our focus lies on initial algebras in the category of linear types,
because the methods of Awodey, Frey, and Speight can be used directly
to construct initial algebras in cartesian types.
Throughout this section,
we focus on the type of lists whose type of elements is some closed linear type $A$.
Since we only work within the universe $\Univ$ in the remainder,
we use normal notation instead of $\Univ$-internal notation,
e.g. we write $\Sqcap$ instead of $\widehat{\Sqcap}$.

Following Girard \cite{girard:1989},
we start with the following encoding.
\[
\ListI{A} \defeq \LPi{X : \Univ}{\ElL{X} \lin \expL{(A \lin \ElL{X} \lin \ElL{X})} \lin \ElL{X}}
\]
We define the constructor $\NilI : \ListI{A}$ as follows.
\[
\NilI \defeq \lam{(X : \Univ) (n : \ElL{X}) (c : \expL{(A \lin \ElL{X} \lin \ElL{X})})}{n}
\]
If we have $a : A$ and $l : \ListI{A}$,
then $\ConsI{a}{l} : \ListI{A}$ is defined as follows.
\[
\ConsI{a}{l} \defeq \lam{(X : \Univ) (n : \ElL{X}) (c : \expL{(A \lin \ElL{X} \lin \ElL{X})})}{c \> a \> (l \> X \> n \> c)}
\]
To define the recursor,
we assume that we have a type $X : \Univ$, an element $n : \ElL{X}$, and an operation $c : \expL{(A \lin \ElL{X} \lin \ElL{X})}$,
and we define $\Rec{X}{n}{c}{l} : \ElL{X}$ as
\[
\Rec{X}{n}{c}{l} \defeq l \> X \> n \> c
\]

In general, the type $\ListI{A}$ only is a weakly initial algebra
and not necessarily initial~\cite{geuvers:2001,rummelhoff:2004}.
We obtain the desired initial algebra by taking a suitable subtype,
and for that reason,
we first define algebras and their morphisms.
\[
\Alg
\defeq
\sum_{X : \Univ} \m(\ElL{X}) \times \m(\expL{(A \lin \ElL{X} \lin \ElL{X})})
\]
An algebra is given by a type $X : \Univ$
together with cartesian terms of type $\m(\ElL{X})$ and $\m(\expL{(A \lin \ElL{X} \lin \ElL{X})})$.
Such cartesian terms correspond to linear terms of types $\ElL{X}$ and $\expL{(A \lin \ElL{X} \lin \ElL{X})}$ in the empty linear context.
We introduce the following notation:
if we have an algebra $X : \Alg$,
we write $n_X : \ElL{X}$
and $c_X : \expL{(A \lin \ElL{X} \lin \ElL{X})}$
for the structure maps.


Next we define morphisms between algebras.
Given  $f : \ElL{X} \lin \ElL{Y}$,
we define
\begin{align*}
&\IsMor{X}{Y}{f}
\defeq
(\lineq{f \> n_X}{n_Y})
\times
\prod (a : A) (x : X) . \lineq{f(c_X \> a \> x)}{c_Y \> x \> (f \> y)}
\\
&\Mor{X}{Y}
\defeq
\sum (f : \m(\ElL{X} \lin \ElL{Y})) . \IsMor{(X, n_X, c_X)}{(Y, n_Y, c_Y)}{f}
\end{align*}
Every cartesian term $f : \m(\ElL{X} \lin \ElL{Y})$ gives rise to a linear term of type $\mu(f) : \ElL{X} \lin \ElL{Y}$ in the empty context (c.f. the $\m$-\textsc{e} rule),
and, to lighten the notation,
we omit the $\mu$.
Thus,
every morphism $f$ of algebras gives rise to a linear term $f : \ElL{X} \lin \ElL{Y}$ in the empty context.
By injectivity and reflection,
the expected diagrams commute.


Now we are ready to define the refined encoding.
Awodey, Frey, and Speight construct the initial algebra using an equalizer,
which gives a subtype of  $\ListI{A}$.
Although they use $\sum$-types to construct the equalizer,
we refrain from the linear analogue,
namely $\LSigSym$-types.
The type $A$ in $\LSig{x : A}{B  \> x}$ is required to be cartesian,
so we cannot use $\LSigSym$-types to construct subtypes of linear types.
This is why we included equalizers in \Cref{sec:dltt}.

The necessary morphisms are defined as follows.
\begin{align*}
&\varphi , \psi : \ListI{A} \lin
\LPi{(X, n_X, c_X) : \Alg}{\LPi{(Y, n_Y, c_Y) : \Alg}{\LPi{(f, p, q) : \Mor{X}{Y}}{\ElL{Y}}}}
\\
&\varphi \defeq \lambda l \, X \, n_X \, c_X \, Y \, n_Y \, c_Y \, f \, p \, q.
f(\Rec{X}{n_X}{c_X}{l})
\\
&\psi \defeq \lambda l \, X \, n_X \, c_X \, Y \, n_Y \, c_Y \, f \, p \, q.
\Rec{Y}{n_Y}{c_Y}{l}
\end{align*}
Note that in the definition of $\varphi$,
the morphism $f$ refers to the corresponding linear term of type $\ElL{X} \lin \ElL{Y}$ in the empty linear context.

We define the type $\List{A}$ to be $\LEq{\varphi}{\psi}$.
Now we prove that $\List{A}$ is an initial algebra,
and we start by showing that that $\List{A}$ is an algebra.
To do so,
we prove that $\NilI$ and $::^*$ give rise to operations on $\List{A}$,
which is stated in the following lemma.

\begin{lemma}
\label{lem:list-alg}
We have $\varphi(\NilI) = \psi(\NilI)$.
In addition,
we have $\varphi(\ConsI{a}{l}) = \psi(\ConsI{a}{l})$
for all $a : X$ and $l : \List{A}$.
\end{lemma}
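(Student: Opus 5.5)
The equalities in the statement are linear identity types, so each unfolds to a propositional equality $\m(\varphi(\cdot)) = \m(\psi(\cdot))$. I will establish them via $\Sqcap$-\textsc{FunExt}, applied once for each $\Sqcap$-binder in the codomain of $\varphi,\psi$. This reduces each claim to a pointwise equality in $\ElL{Y}$, for arbitrary algebras $(X,n_X,c_X)$, $(Y,n_Y,c_Y)$ and a morphism $(f,p,q) : \Mor{X}{Y}$. Throughout, the $\Sqcap$-$\beta$ rules and the isomorphisms \eqref{eqn:el-isos} let me unfold the definitions of $\varphi$, $\psi$, $\NilI$, $::^*$ and $\mathsf{rec}$ judgementally.

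For $\NilI$, the unfolding gives
\[
\varphi(\NilI)\,X\,n_X\,c_X\,Y\,n_Y\,c_Y\,f\,p\,q \judeq f(\Rec{X}{n_X}{c_X}{\NilI}) \judeq f\,n_X ,
\]
and likewise $\psi(\NilI)(\ldots) \judeq \Rec{Y}{n_Y}{c_Y}{\NilI} \judeq n_Y$. The required equality is therefore exactly the first component $p : \lineq{f\,n_X}{n_Y}$ of $\IsMor{X}{Y}{f}$, and function extensionality closes this case.

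For the cons case, fix $a$ and $l : \List{A}$. I will write $l$ also for $\iota_{\varphi,\psi}\,l : \ListI{A}$; since this requires applying $\iota$, I will work with $l$ as a cartesian variable of type $\m(\List{A})$, or equivalently via $\mu$, so that everything happens in the empty linear context where $\m$, reflection and $\m$-\textsc{inj} apply. Unfolding gives
\[
\varphi(\ConsI{a}{l})(\ldots) \judeq f(c_X\,a\,(\Rec{X}{n_X}{c_X}{l})) .
\]
By the second component $q$ of $\IsMor{X}{Y}{f}$, instantiated at $a$ and $x \defeq \Rec{X}{n_X}{c_X}{l}$ (in the intended reading $c_Y\,a\,(f\,x)$), this equals $c_Y\,a\,(f(\Rec{X}{n_X}{c_X}{l}))$. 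The inner term $f(\Rec{X}{n_X}{c_X}{l})$ is $\varphi(l)$ applied to the same arguments. Because $l$ lies in the equalizer, the rule \textsc{eq-$\beta_1$} gives $\varphi\circ\iota \judeq \psi\circ\iota$, so this is judgementally equal to $\psi(l)(\ldots) \judeq \Rec{Y}{n_Y}{c_Y}{l}$. Hence the whole expression equals $c_Y\,a\,(\Rec{Y}{n_Y}{c_Y}{l}) \judeq \Rec{Y}{n_Y}{c_Y}{\ConsI{a}{l}} \judeq \psi(\ConsI{a}{l})(\ldots)$. Chaining these equalities by transitivity and using congruence of $c_Y\,a$ (made propositional via $\m$ and equality reflection), followed by function extensionality, finishes the proof.

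I expect the bookkeeping for the modalities to be the main obstacle, not the calculation itself. The equation $q$ is stated propositionally on $\m$-images, so turning it into a judgemental rewrite inside $f(\ldots)$ or $c_Y\,a\,(\ldots)$ requires linear equality reflection, i.e.\ equality reflection combined with $\m$-\textsc{inj}, all in the empty linear context. I also have to check that $c_X$ and $c_Y$, which live under $\expL{}$, can be used as functions there, via dereliction from $\m$ of the $\expL{}$-type. The hypothesis that $l$ is in the equalizer is used only through \textsc{eq-$\beta_1$} composed with $\iota$, and the unit/linearity constraints on $a$ appear only through ordinary $\lin$-application.
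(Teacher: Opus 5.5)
Your proposal is correct and follows the same route as the paper. You unfold both sides by $\beta$. You close the $\NilI$ case with the first component of $\IsMor{X}{Y}{f}$, and the $::^*$ case with the second component together with the fact that $l$ lies in the equalizer. The paper simply assumes that fact as $\varphi\,l = \psi\,l$; you obtain it from \textsc{eq-$\beta_1$}. You are also more explicit than the paper about the function-extensionality reduction and about keeping $a$ and $l$ in the empty linear context, but the argument is the same.
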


\begin{proof}
To show that $\varphi(\NilI) = \psi(\NilI)$,
we first note that
\begin{align*}
&\varphi(\NilI) = f(\Rec{X}{n_X}{c_X}{\NilI}) = f \> n_X
\\
&\psi(\NilI) = \Rec{Y}{n_Y}{c_Y}{\NilI} = n_Y
\end{align*}
Since $f$ is a morphism,
it is indeed the case that $f \> n_X = n_Y$ by reflection and injectivity.

For the other statement, we assume that
\[
\varphi \> l = f(\Rec{X}{n_X}{c_X}{l}) = \Rec{Y}{n_Y}{c_Y}{l} = \psi \> l.
\]
Now we have
\begin{align*}
&\varphi(\ConsI{a}{l}) = f(\Rec{X}{n_X}{c_X}{\ConsI{a}{l}}) = f(c_X \> a \> (\Rec{X}{n_X}{c_X}{l}))
\\
&\psi(\ConsI{a}{l}) = \Rec{Y}{n_Y}{c_Y}{\ConsI{a}{l}} = c_Y \> a \> (\Rec{Y}{n_Y}{c_Y}{l}) = c_Y \> a \> (f(\Rec{X}{n_X}{c_X}{l}))
\end{align*}
From injectivity,
it follows that $\lineq{\varphi(\ConsI{a}{l})}{\psi(\ConsI{a}{l})}$
since $f$ is a morphism.
\end{proof}

By \Cref{lem:list-alg},
we have an element $\Nil : \List{A}$
and a function sending $a : A$ and $l : \List{A}$
to $\Cons{a}{l} : \List{A}$.
From this
we get an algebra $\List{A}$
whose carrier is $\List{A}$
and whose algebra map is given by $\Nil$ and $::$.
We conclude this section by proving that $\List{A}$ is initial.

\begin{theorem}
\label{thm:initial-impred}
The algebra $\List{A}$ is initial.
\end{theorem}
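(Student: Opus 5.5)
Given an algebra $(X, n_X, c_X) : \Alg$, we will show that there is exactly one algebra morphism from $\List{A}$ to it, following the argument of Awodey, Frey, and Speight but with the equalizer $\LEq{\varphi}{\psi}$ in place of a $\sum$-subtype.

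\emph{Existence.} The candidate morphism is the restricted recursor
\[
r_X \defeq \lam{l}{\Rec{X}{n_X}{c_X}{\iota\,l}} : \ElL{\List{A}} \lin \ElL{X},
\]
where $\iota = \iota_{\varphi,\psi}$. Both morphism equations follow from the $\beta$-rules of the impredicative encoding together with \textsc{eq-$\beta_2$}. For instance, $\iota\,\Nil \judeq \NilI$ holds because $\Nil$ is $\overline{h}$ for the map $h$ picking out $\NilI$. Hence $r_X\,\Nil \judeq \NilI\,X\,n_X\,c_X \judeq n_X$. The cons case is analogous, using $\iota(\Cons{a}{l}) \judeq \ConsI{a}{(\iota\,l)}$. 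Applying $\m$ and then reflexivity produces the proofs $p,q$ required by $\IsMor{}{}{}$.

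\emph{Uniqueness.} Let $(g,p,q)$ be any morphism $\List{A} \to X$. The goal is $\lineq{g}{r_X}$, which by $\Sqcap$-FunExt (the non-dependent version for $\lin$) reduces to $\lineq{g\,l}{r_X\,l}$ for $l : \List{A}$. I will carry this out in two steps.

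\begin{enumerate}
\item Show that $r_{\List{A}} : \List{A} \lin \List{A}$ is the identity. Since $\List{A}$ is itself an algebra, existence gives a morphism $r_{\List{A}}$. By \textsc{eq-$\eta$} it suffices to show $\iota \circ r_{\List{A}} \judeq \iota$, i.e.\ that $\iota(\Rec{\List{A}}{\Nil}{\Cons{}{}}{\iota\,l}) \judeq \iota\,l$. This is the step I expect to be the main obstacle. The equalizer condition on $l$, namely $\varphi\,(\iota\,l) \judeq \psi\,(\iota\,l)$ by \textsc{eq-$\beta_1$}, says that for every algebra morphism $f : Y \to Z$ we have $f(\Rec{Y}{}{}{\iota l}) = \Rec{Z}{}{}{\iota l}$. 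I will instantiate this with $f = \iota$, regarded as a morphism from the algebra $\List{A}$ to the weak algebra $(\ListI{A}, \NilI, ::^*)$; this is a morphism because of the judgemental equations established in the existence step. The condition then gives
\[
\iota(\Rec{\List{A}}{}{}{\iota l}) = \Rec{\ListI{A}}{\NilI}{::^*}{\iota l}.
\]
The right-hand side equals $\iota\,l$ by a second use of $\Sqcap$-FunExt and $\eta$ (the usual fact that $l\,\ListI{A}\,\NilI\,::^* = l$ holds extensionally). I must check that the relevant types lie in $\Univ$, which requires closure of $\Univ$ under linear functions and equalizers, and I must pass from propositional to judgemental equality via linear equality reflection. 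Both points are where I expect to need care.
\item Apply the equalizer condition to $g$. Instantiating the condition with the morphism $g : \List{A} \to X$ gives
\[
g\bigl(\Rec{\List{A}}{\Nil}{\Cons{}{}}{\iota\,l}\bigr) = \Rec{X}{n_X}{c_X}{\iota\,l} = r_X\,l.
\]
By the first step the left-hand side is $g\,l$, which completes uniqueness.
\end{enumerate}

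Throughout, the identity type used is the derived $=_\mathsf{L}$, and equalities are converted between propositional and judgemental forms using equality reflection together with $\m$-\textsc{inj}.
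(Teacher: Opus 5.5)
Your existence argument and your step 2 are fine. Reducing uniqueness to $r_{\List{A}}$ being the identity is exactly the reduction the paper makes. The gap is at the end of step 1. There you assert that $\Rec{\ListI{A}}{\NilI}{\ConsSym^*}{\iota\,l} = \iota\,l$ follows from $\Sqcap$-FunExt and $\eta$, as ``the usual fact'' $l\,\ListI{A}\,\NilI\,\ConsSym^* = l$.

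For an arbitrary $l : \ListI{A}$ this reflection law is not derivable. It is precisely what fails for the unrefined encoding; this is the weak-initiality phenomenon the paper cites. Nor does $\eta$ help. After FunExt you must show $\Rec{Y}{n_Y}{c_Y}{\Rec{\ListI{A}}{\NilI}{\ConsSym^*}{\iota\,l}} = \Rec{Y}{n_Y}{c_Y}{\iota\,l}$ for every algebra $(Y,n_Y,c_Y)$, and the left-hand side does not reduce because $l$ is a variable.

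The equation holds only because $l$ lies in the equalizer. You need to invoke $\varphi(\iota\,l) = \psi(\iota\,l)$ a second time, now at the map $\lambda k.\,\Rec{Y}{n_Y}{c_Y}{k}$ from the weak algebra $(\ListI{A},\NilI,\ConsSym^*)$ to $(Y,n_Y,c_Y)$. That map is an algebra morphism by the computation rules of the encoding. So the step you treat as routine is the actual crux of uniqueness. The points you flag as delicate (universe membership, passing through reflection and $\m$-\textsc{inj}) are only bookkeeping.

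Once repaired, your step 1 is correct but roundabout, because the first instantiation at $\iota$ becomes unnecessary. The paper applies function extensionality directly to $\iota(\Rec{\List{A}}{\Nil}{\ConsSym}{\iota\,l})$ and $\iota\,l$. This reduces the goal to $\Rec{Y}{n_Y}{c_Y}{\iota(\Rec{\List{A}}{\Nil}{\ConsSym}{\iota\,l})} = \Rec{Y}{n_Y}{c_Y}{\iota\,l}$ for every algebra $Y$. That is a single instance of the equalizer condition on $l$, at the morphism $r_Y : \List{A} \to Y$ from your existence part.
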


\begin{proof}
Given an algebra $X$,
we define $f$ to be $\lambda (l : \List{A}) . \Rec{X}{n_X}{c_X}{l}$.
Note that
\begin{align*}
&\Rec{X}{n_X}{c_X}{\projl{\Nil}} = \Rec{X}{n_X}{c_X}{\NilI} = n_X
\\
&\Rec{X}{n_X}{c_X}{\projl{(\Cons{a}{l})}} = \Rec{X}{n_X}{c_X}{\ConsI{a}{l}} = c_X \> a \> (\Rec{X}{n_X}{c_X}{l})
\end{align*}

To prove that the map $f$ is unique,
it suffices to show that $\Rec{\List{A}}{\Nil}{\ConsSym}{l} = l$
for each $l : \List{A}$.
By function extensionality,
we need to show for all $X$, $n_X$, and $c_X$ that
\[
\Rec{\List{A}}{\Nil}{\ConsSym}{l} \> X \> n_X \> c_X = l \> X \> n_X \> c_X.
\]
By definition,
we have $l \> X \> n_X \> c_X = \Rec{X}{n_X}{c_X}{l}$
and that
\[
\Rec{\List{A}}{\Nil}{\ConsSym}{l} \> X \> n_X \> c_X
=
\Rec{X}{n_X}{c_X}{\Rec{\List{A}}{\Nil}{\ConsSym}{l}}.
\]
It suffices to show that
\[
\Rec{X}{n_X}{c_X}{\Rec{\List{A}}{\Nil}{\ConsSym}{l}} 
=
\Rec{X}{n_X}{c_X}{l}.
\]
Since $\lambda l. \Rec{X}{n_X}{c_X}{l}$ is a morphism,
the theorem follows from the fact that $l : \List{A}$.
\end{proof}








\section{Related work}\label{sec:related}


Linear combinatory algebras were first introduced in a series of lectures given by Abramsky \cite{Abramsky1997Interaction}; the first published account is \cite{AbramskyHaghverdiScott2002LCAs}, wherein there are many examples. The first linear realizability model is due to Abramsky and Lenisa \cite{AbramskyLenisa2005LinearRealizability}, who consider categories of partial equivalence relations over LCAs in order to obtain full completeness results for various typed $\lambda$-calculi. Hoshino \cite{hoshino:2007} offered the first in-depth study of assemblies and modest sets over (relational) linear combinatory algebras. Since then, there has been a growing interest in ``substructural'' combinatory algebras and their associated categorical structures, for example: ribbon combinatory algebras \cite{Hasegawa2021BraidedLambda,HasegawaLechenne2024Braids} and planar combinatory algebras \cite{Tomita2021RealizabilityWithoutSymmetry,Tomita2022PlanarRealizability,Tomita2023CategoricalRealizability}. Kuzmin, Nester, Reimaa and Speight \cite{KuzminNesterReimaaSpeight2025CombinatoryCompleteness} formulate a quite general theory of combinatory completeness using structured multicategories.
Recently, monadic combinatory algebras have been studied for effectful realizability \cite{cohen:2025a,cohen:2019,cohen:2025}.

Various treatments of linear dependent type theory are given throughout the literature.
The first development of LDTT was by Cervesato and Pfenning~\cite{cervesato1996linear,MR1946180},
but they only considered a subset of the desired type formers~\cite{MR1946180}.
Krishnaswami, Pradic, and Benton gave a version of LDTT
that contains all connectives from linear logic~\cite{DBLP:conf/popl/KrishnaswamiPB15}.
Similar treatments were given by V\'ak\'ar~\cite{vakar:2015}
and by Lundfall~\cite{lundfall:2017},
with some differences.
Compared to Krishnaswami-Pradic-Benton,
Lundfall gives different rules for identity types and the modality,
and V\'ak\'ar has a slightly different treatment of $\sum$-types.
We use a similar syntax as Krishnaswami, Pradic, and Benton.
The key differences are indicated in \Cref{sec:dltt}:
our syntax comes with equalizer types, an injectivity axiom for the modality $\m$, and an impredicative universe of types.

Both Lundfall~\cite{lundfall:2017} and  V\'ak\'ar~\cite{vakar:2015} give a notion of categorical model for LDTT.
Lundfall's treatment is based on comprehension categories,
whereas V\'ak\'ar uses indexed categories instead.
In addition,
Lundfall represents the modality of linear logic via an adjunction
whereas V\'ak\'ar uses a comonad.
We base our semantics on Lundfall's notion of linear comprehension category.
Neither Lundfall nor V\'ak\'ar consider a realizability model.
Krishnaswami, Pradic, and Benton~\cite{DBLP:conf/popl/KrishnaswamiPB15} give semantics to their calculus using partial equivalance relations,
but they do not provide a general development of realizability models.

Quantitative type theory~\cite{atkey:2018,mcbride:2016,nakov:2021}
is a generalization of linear dependent type theory
that refines the linearity condition by allowing variables to
appear in quantities indicated by some semiring.
Atkey~\cite{atkey:2018} constructed a model of quantitative type theory using a quantitative combinatory algebra.
The resulting model shows resemblance to ours,
a crucial difference being that our model is impredicative.
Since Atkey interprets types as assemblies indexed by a set rather than an assembly,
there cannot be an impredicative universe in that model~\cite{phoa:1992}.
Hence, the impredicative encoding in \Cref{sec:encodings} can only be interpreted in our model. 

The method of refining impredicative encodings that we use in \Cref{sec:encodings} is based on the work by Awodey, Frey, and Speight \cite{awodey:2018}, which treats (higher) inductive types.
The truncation restriction on eliminators is lifted in \cite{ripollecheveste:2023}.
The method has also been extended to coinductive types \cite{bronsveld:2025}.
Another way to fix impredicative encodings is via parametricity.
By assuming internal parametricity,
initiality for impredicative encodings can be proven~\cite{altenkirch:2024,wadler:2003,wadler:2007}.
Such axioms are validated by PER models of type theory~\cite{krishnaswami:2013}.
One can use impredicative encodings in higher-order separation logic to define inductive predicates~\cite{krebbers:2025}.
Since they use proof-irrelevant predicates,
they do not need to refine their encodings.

\section{Conclusion}\label{sec:conclusion}
In this paper,
we defined a realizability model of linear dependent type theory with an impredicative universe for cartesian and linear types, all parametrized by a linear combinatory algebra.
We also showed how the method of Awodey, Frey and Speight~\cite{awodey:2018} transfers to impredicative encodings of linear types, constructing the initial algebra of linear lists.

There are several ways to further develop this work.
One can further generalize the development in \Cref{sec:encodings}
by considering arbitrary inductive types rather than only lists.
We conjecture that the same ideas can be used.
In addition,
we use extensionality throughout this paper,
and our injectivity axiom is tailored for this setting.
One question is how to generalise our work to intensional versions of linear dependent type theory.

\bibliography{fscd-references}

\end{document}